%&latex
\documentclass[11pt]{article}
\usepackage{amsmath}
\usepackage{amssymb}
\usepackage{amsthm}
\usepackage{amsfonts}
\usepackage{color}

\setlength{\textwidth}{6.50 true in}
\setlength{\oddsidemargin}{-0.125 true in}
\setlength{\evensidemargin}{-0.125 true in}
\setlength{\textheight}{8.5 true in}
\setlength{\topmargin}{-0.5 true in}
\setlength{\parindent}{15pt}
\setlength{\parskip}{5pt}

\newcounter{AbcT}

\newtheorem {Theorem}    {Theorem}[section]

\newtheorem {Lemma}      [Theorem]    {Lemma}

\newtheorem {Proposition}[Theorem]    {Proposition}

\theoremstyle{definition}   %SJ
  %SJ
 %[section]
  %SJ

%\newcommand {\Remark}     {\Head{Remark}}

\newcommand{\ignore}[1]{}

\def\eps{\epsilon}

\def\P{{\bf{P}}}

\def\v0{{\bf 0}}

\def\0{\hat{0}}
\def\1{\hat{1}}

\def\path{{\tt path}}

\def\phi{\varphi}

\def\be{\begin{equation}}
\def\ee{\end{equation}}

\definecolor{Red}{rgb}{1,0,0}
\definecolor{Blue}{rgb}{0,0,1}
\definecolor{Olive}{rgb}{0.41,0.55,0.13}
\definecolor{Green}{rgb}{0,1,0}
\definecolor{MGreen}{rgb}{0,0.8,0}
\definecolor{DGreen}{rgb}{0,0.55,0}
\definecolor{Yellow}{rgb}{1,1,0}
\definecolor{Cyan}{rgb}{0,1,1}
\definecolor{Magenta}{rgb}{1,0,1}
\definecolor{Orange}{rgb}{1,.5,0}
\definecolor{Violet}{rgb}{.5,0,.5}
\definecolor{Purple}{rgb}{.75,0,.25}
\definecolor{Brown}{rgb}{.75,.5,.25}
\definecolor{Grey}{rgb}{.5,.5,.5}
\definecolor{Black}{rgb}{0,0,0}

\title{Arrow's Impossibility Without Unanimity}

\author{Elchanan Mossel
\thanks{Weizmann Institute and U.C. Berkeley. Supported by an Alfred Sloan fellowship
  in Mathematics, by NSF CAREER grant DMS-0548249 (CAREER), by DOD ONR grant (N0014-07-1-05-06), by BSF grant 2004105 and by ISF grant
  1300/08.}
  }
\begin{document}
\maketitle

\begin{abstract}
Arrow's Impossibility Theorem states that any constitution which satisfies Transitivity, Independence of Irrelevant Alternatives (IIA) and Unanimity is a dictatorship.
Wilson derived properties of constitutions satisfying Transitivity and IIA for unrestricted domains where ties are allowed.
In this paper we consider the case where only strict preferences are allowed. In this case
we derive a new short proof of Arrow theorem and further obtain a new and complete characterization of all functions satisfying Transitivity and IIA. The proof is based on a variant of the method of pivotal voters due to Barbera. 
%In this paper we derive a proof that the conclusion of Arrow Theorem holds where the Unanimity condition is %replaced by the weaker condition of  Weak Non Imposition. The Weak Non Imposition condition states that for %every two alternatives $a$ and $b$, there exists a profile of preferences where the constitution ranks $a$ %above $b$. We further give a complete characterization of all constitutions satisfying IIA and %Transitivity. We thus provide a new independent proof to results derived in 1972 by Robert Wilson.
\end{abstract}

\section{Introduction}
Arrow's Impossibility theorem~\cite{Arrow:50,Arrow:63} states that certain properties cannot hold simultaneously for constitutions on three or more alternatives. Consider $A = \{a,b,\ldots,\}$, a set of $k \geq 3$ alternatives. A (strict) {\em transitive preference} over $A$ is a ranking of the alternatives from top to bottom where ties are not allowed. Such a ranking corresponds to a {\em permutation} $\sigma$ of the elements $1,\ldots,k$ where $\sigma_i$ is the rank of alternative $i$.

We consider a society of $n$ individuals labeled $1,\ldots,n$, each of them has a transitive preference. A {\em constitution} is a function $F$ that associates to every $n$-tuple $\sigma = (\sigma(1),\ldots,\sigma(n))$ of transitive preferences (also called {\em profile}),
and every pair of alternatives $a,b$ a preference between $a$ and $b$.
Some basic properties of constitutions are:

%More explicitly it says that any social welfare function on three or more candidates which satisfies Independence of Irrelevant %Alternatives (IIA) and the Partero Efficiency (PE) conditions must be the dictator function.

%We provide a short proof of a generalization of Arrow theorem where the condition of Partero Efficiency is replaced with the weaker %condition of Non Imposition or even Weak Non Imposition. We recall some of these concepts below:

\begin{itemize}
\item
{\em Transitivity}. The constitution $F$ is {\em transitive} if $F(\sigma)$ is transitive for all $\sigma$.
In other words, for all $\sigma$ and for all three alternatives $a,b$ and $c$, if $F(\sigma)$ prefers $a$ to $b$, and prefers $b$ to $c$, it also prefers $a$ to $c$.
\item
{\em Independence of Irrelevant Alternatives (IIA).} The constitution $F$ satisfies the IIA property if for every pair of alternatives $a$ and $b$, the social ranking of $a$ vs. $b$ (higher or lower) depends only on their relative rankings by all voters.
\item
{\em Unanimity.} The constitution $F$ satisfies {\em Unanimity} if the social outcome ranks $a$ above $b$ whenever all individuals rank $a$ above $b$.
\item
{\em Dictatorship}. The constitution $F$ is a {\em dictatorship by individual $i$} or the $i$'th {\em dictator} if the social outcome ranking of voter $i$ so that either $F(\sigma) = \sigma(i)$ for all $\sigma$ or $F(\sigma) = -\sigma(i)$ for all $\sigma$ where
$-\sigma(i)$ is the ranking $\sigma_k(i) > \sigma_{k-1}(i) \ldots \sigma_2(i) > \sigma_1(i)$ by reversing the ranking $\sigma$.
%for every pair of individuals $a$ and $b$ either:
%\begin{itemize}
%\item
%For all profiles, the constitution ranks $a$ above $b$ if and only if voter $i$ ranks $a$ above $b$; or
%\item
%For all profiles, the constitution ranks $a$ above $b$ if and only if voter $i$ ranks $a$ below $b$; or
%\item
%For all profiles, the constitution ranks $a$ above $b$; or
%\item
%For all profiles, the constitution ranks $b$ above $a$.
%\end{itemize}
\item
{\em Non-Imposition (NI)}. The constitution $F$ is a {\em Non-Imposition (NI)} if every transitive
preference order is achievable by some profile of transitive preferences.
\item
{\em Weak Non Imposition (WNI)}.
The constitution $F$ is a {\em Weak Non-Imposition (WNI)} if for every pair of alternatives $a,b$,
there exists a profile where the constitution ranks $a$ above $b$.
\item
{\em Non Degeneracy (ND)}.
The constitution $F$ is a {\em Degenerate} if there exists an alternative $a$ such that for all profiles $F$ ranks at the top (bottom).
%$a,b$ and $c$ such that $F$ places both $a$ and $b$ above $c$ for all profiles or there exists three alternatives $a,b$ and $c$ such $F$ %places both $a$ and $b$ below $c$ for all profiles.
The constitution $F$ is {\em Non Degenerate (ND)} if it is not degenerate.
%\item
%{\em Very Weak Non Imposition (VWNI)}. The constitution $F$ is a {\em Very Weak Non-Imposition} and every three %candidates $a,b$ and $c$
\end{itemize}
We note that
\[
\mbox{Unanimity } \implies \mbox { NI } \implies \mbox{ WNI } \implies \mbox{ND},
\]
and that non of the reverse implications hold.

Our definition of dictator is more general than the standard definition as dictatorship.
To see that this is needed in the setup considered here, look at the constitution on $3$ alternatives and a single voter that assigns to the ranking $\sigma_1 > \sigma_2 > \sigma_3$, the reverse ranking $\sigma_3 > \sigma_2 > \sigma_1$. This constitution satisfies Transitivity, IIA and WNI. However it is not a dictator according to the "standard definition".
%Similarly for the function that always ranks $a$ on top and decides between $b$ and $c$ according to their ranking by the first voter.
Note furthermore that the only dictator function which satisfies the Unanimity condition is the identity map and therefore Theorem~\ref{thm:general} implies Theorem~\ref{thm:arrow}.

The basic idea behind Arrow's impossibility is that there is some "tension" between IIA and Transitivity.
The original statement of Arrow theorem~\cite{Arrow:50} states that if the constitution $F$ is monotone, satisfies Non-Imposition, IIA and Transitivity then $F$ has to be a dictator. Later Arrow derived a stronger statement~\cite{Arrow:63}
\begin{Theorem} \label{thm:arrow}
Any constitution on three or more alternatives which satisfies Transitivity, IIA and Unanimity is a dictatorship.
\end{Theorem}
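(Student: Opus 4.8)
The plan is to prove Arrow's theorem via the method of pivotal voters, which the abstract signals as the chosen technique (following Barbera). Since the paper defines a generalized notion of dictator, but notes that the only dictator satisfying Unanimity is the identity map, I would actually prove the stronger statement: a constitution satisfying Unanimity, IIA, and Transitivity must be the identity dictatorship (i.e.\ some voter $i$ with $F(\sigma) = \sigma(i)$). The central object is a fixed pair of alternatives $a,b$ together with a profile in which every voter ranks $a$ above $b$; by Unanimity the society ranks $a$ above $b$. I would then march through the voters one at a time, flipping each voter's preference from ``$a$ above $b$'' to ``$b$ above $a$,'' until the social outcome first flips to ranking $b$ above $a$. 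The voter at whom this flip occurs is the \emph{pivotal voter} for the pair $(a,b)$; call this voter $i^*$. Existence of such a voter is immediate because the all-$a$-above-$b$ profile yields $a$ above $b$ while the all-$b$-above-$a$ profile yields $b$ above $a$ by Unanimity.

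The heart of the argument is to show that this pivotal voter $i^*$ is in fact a dictator over \emph{every} pair of alternatives, not merely over $(a,b)$. First I would observe that by IIA the pivotal voter is well-defined: the location of the flip depends only on the relative $a$-vs-$b$ rankings of the voters, so $i^*$ depends only on the pair $(a,b)$ and not on the irrelevant details of the profile. The key step is a ``neutrality''-type transfer argument: taking a third alternative $c$ (available since $k \geq 3$) and constructing a carefully chosen profile in which voter $i^*$ ranks $a > c > b$ while the other voters are split so that everyone before $i^*$ ranks $b$ above $a$ (hence $b$ above $c$ is forced in some arrangement) and everyone after ranks $a$ above $b$, I would use IIA on the pairs $(a,c)$ and $(c,b)$ together with Transitivity to pin down the social rankings of $a$ vs.\ $c$ and of $c$ vs.\ $b$ entirely from voter $i^*$'s preference. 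This shows $i^*$ dictates the pairs $(a,c)$ and $(c,b)$ as well, and a symmetric construction propagates the dictatorship to all pairs involving $a$, $b$, or $c$, and then to arbitrary pairs by varying the choice of intermediate alternatives.

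The main obstacle, and the step requiring the most care, is the transfer of dictatorship across pairs using Transitivity: one must design the auxiliary profiles so that IIA constrains each pairwise comparison in isolation, while Transitivity forces the global social ranking to be consistent, and the only way to reconcile these constraints is for $i^*$'s vote to determine the outcome. The delicate point is ensuring that the constructed profiles are genuinely realizable by transitive strict preferences (no ties) and that the pivot position is consistent across the different pairs being analyzed. Once the single pivotal voter $i^*$ is shown to dictate all pairwise comparisons, Transitivity of $F(\sigma)$ forces $F(\sigma)$ to agree with voter $i^*$'s full transitive ranking $\sigma(i^*)$ on every pair, so $F(\sigma) = \sigma(i^*)$ for all $\sigma$; the reversal option $-\sigma(i)$ is excluded precisely by Unanimity, since a reversing dictator would rank $a$ below $b$ on the all-$a$-above-$b$ profile. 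This yields the identity dictatorship and completes the proof.
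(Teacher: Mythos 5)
Your plan is the Geanakoplos-style sequential-pivot proof, which is a legitimate route and genuinely different from the paper's: the paper instead proves a local lemma (Theorem~\ref{thm:tech}, a rediscovery of Barbera's result) showing that if two \emph{different} voters are pivotal for $f^{a>b}$ and $f^{b>c}$ then some profile yields a non-transitive outcome, deduces that all pairwise functions share a single pivotal voter $i$, hence $f^{a>b}(x)=\pm x_i$, and uses Unanimity only to fix the sign. However, your sketch has a genuine gap at precisely the step you yourself flag as the heart of the argument. You define $i^*$ by flipping only the voters' $a$-vs-$b$ preferences; by IIA this determines the social $a$-vs-$b$ comparison at each profile along the march, and \emph{nothing else} --- at this point nothing whatsoever is known about the functions $f^{a>c}$ and $f^{c>b}$. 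In your auxiliary profile (voter $i^*$ ranks $a>c>b$, earlier voters rank $b$ above $a$ and $c$, later voters rank $a$ above $b$), the only social comparison you know is $a$ above $b$. Given that, Transitivity excludes exactly one of the four possible pairs of outcomes for ($a$ vs.\ $c$, $c$ vs.\ $b$) --- namely $c>a$ together with $b>c$ --- and leaves the other three combinations open. So IIA plus Transitivity do \emph{not} ``pin down the social rankings of $a$ vs.\ $c$ and of $c$ vs.\ $b$ entirely from voter $i^*$'s preference''; the transfer step is asserted, not proved.

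To repair it along your intended lines you need the ingredient your sketch omits: Geanakoplos's extremal lemma, stating that if every voter places $b$ at the very top or very bottom of his ranking, then society places $b$ at the top or bottom (proved from IIA, Transitivity and Unanimity by lifting a third alternative above another in all rankings without disturbing any comparison with $b$). One then redefines the pivot as the voter at which the social position of $b$ jumps from bottom to top as $b$ is moved from the bottom to the top of whole rankings one voter at a time. Only with this stronger, extremal notion of pivot do the two flanking profiles give \emph{two} known social comparisons ($a>b$ from the pre-pivot profile and $b>c$ from the post-pivot one), which Transitivity converts into $a>c$ socially while the other voters' $a$-vs-$c$ bits stay arbitrary; note also that the pivot's ranking in that construction must be $a>b>c$, not $a>c>b$, since otherwise the $b$-vs-$c$ pattern fails to match the post-pivot profile. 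Alternatively you could prove the paper's Theorem~\ref{thm:tech} directly (its short construction of three profiles realizable because $(x_i,y_i,z_i)$ avoids $(1,1,1)$ and $(-1,-1,-1)$ coordinatewise) and proceed as the paper does. Your existence-of-a-pivot step and your final use of Unanimity to exclude the reversed dictator $-\sigma(i)$ are both fine; the missing mathematics is entirely in the transfer step.
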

The result above is also known to hold when individuals are allowed to have as their preferences all the strict preferences as well as all preferences with ties.

In our first result we relax the unanimity condition and show that
\begin{Theorem} \label{thm:general}
Any constitution on three or more alternatives which satisfies Transitivity, IIA and WNI is a dictatorship.
\end{Theorem}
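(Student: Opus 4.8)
The plan is to first use IIA to replace the constitution by a family of Boolean functions, one per ordered pair of alternatives, then isolate the genuinely three-alternative content as a self-contained combinatorial lemma proved by a pivotal-voter argument in the spirit of Barbera, and finally glue the three-alternative conclusions together.

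First I would fix the reduction. By IIA, for each ordered pair $(a,b)$ the social preference between $a$ and $b$ is a function only of the bits recording, for each voter $i$, whether $i$ ranks $a$ above $b$; encode this as $f_{ab}:\{0,1\}^n \to \{0,1\}$, with the antisymmetry $f_{ba} = 1 - f_{ab}$ under the obvious relabelling of inputs. WNI, applied to the ordered pairs $(a,b)$ and $(b,a)$, shows each $f_{ab}$ is non-constant. The only remaining content of Transitivity is a per-triple constraint: fixing three alternatives $a,b,c$ and writing $x,y,z$ for the three bit-vectors of the pairs $(a,b),(b,c),(c,a)$, a profile is legal exactly when each voter's triple $(x_i,y_i,z_i)$ is not all-equal (the two all-equal patterns are precisely the two $3$-cycles), and Transitivity demands that the output triple $(f_{ab}(x), f_{bc}(y), f_{ca}(z))$ is likewise never all-equal. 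Thus the whole problem reduces to understanding triples of non-constant Boolean functions that preserve the not-all-equal predicate.

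The core lemma I would prove is: if $f,g,h:\{0,1\}^n\to\{0,1\}$ are non-constant and $(f(x),g(y),h(z))$ is never all-equal whenever each $(x_i,y_i,z_i)$ is not all-equal, then there is a single voter $i$ such that either $f,g,h$ all equal the $i$-th coordinate, or all equal its negation. I would prove this by pivotal voters. Starting from a coordinate $i$ at which $f$ flips across some edge of the cube, the idea is to embed that edge into legal coupled profiles and read off, from the no-cycle constraint, that flipping voter $i$ must also move $g$ and $h$ in a forced direction, while no other voter can be pivotal for $f$ without creating a forbidden cycle. Pushing this through shows $f$ depends on $i$ alone, hence $f \in \{x_i, 1-x_i\}$, and that the same voter with the same straight-or-reversed orientation controls $g$ and $h$; the cyclic form of the constraint is exactly what forbids pairing a straight coordinate with a reversed one among the three.

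With the core lemma in hand, the extension to $k \geq 3$ alternatives is a gluing argument. Each $f_{ab}$ is, by the lemma applied to any triple containing $a$ and $b$, a literal in a single variable; since a non-constant function cannot equal both $x_i$ and $x_j$ for $i\neq j$, nor both $x_i$ and $1-x_i$, the controlling voter and the orientation (society agrees with, or reverses, that voter on the pair) are well defined for each pair. The lemma further forces the orientation to be constant across the three pairs of any triple, and since for $k \geq 3$ any two pairs lie in a common triple, the controlling voter is globally the same $i$ and the orientation is globally constant, giving either $F = \sigma(i)$ or $F = -\sigma(i)$ on every profile, which is precisely a dictatorship in the sense of Theorem~\ref{thm:general}. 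The main obstacle is the absence of Unanimity and hence of monotonicity: the textbook pivotal proof anchors the endpoints of the pivotal sequence using Unanimity and uses monotonicity to make one pivotal voter decisive everywhere, and neither is available here (indeed reverse dictators are genuine solutions). The delicate point is therefore to run the pivotal argument using only non-constancy from WNI together with the fact that, because only strict preferences are allowed, the inputs range over the full cube and every not-all-equal coupling is realized by a legal profile; this abundance of legal profiles is what lets the no-cycle constraint pin a single voter to all three relations in a common orientation, and it is the step I expect to require the most care.
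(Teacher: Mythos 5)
Your proposal is correct and takes essentially the paper's route: the same reduction via IIA to pairwise Boolean functions with the not-all-equal encoding of transitivity, the same Barbera-style pivotal-voter argument (your core lemma packages the paper's Theorem~\ref{thm:tech} together with its single-voter classification, using WNI to make every pairwise function non-constant), and the same gluing over triples of alternatives. One small slip: two \emph{disjoint} pairs such as $\{a,b\}$ and $\{c,d\}$ do not lie in a common triple when $k \geq 4$, but this is repaired exactly as in the paper by chaining through the overlapping triples $\{a,b,c\}$ and $\{b,c,d\}$.
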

A similar result was proven in a beautiful paper by Wilson~\cite{Wilson:72} in the case where the voters preferences include preferences with ties.

We briefly comment on the difference between the two results.

\begin{itemize}
\item
We note that in general one cannot conclude impossibility results for one domain of preferences given that the same impossibility result hold for a larger domain of preferences.
Therefore Wilson's result does not imply ours.

\item
A second comment relates to the conclusion of the theorem. While our theorem explicitly characterizes all $2n$ functions that satisfy Transitivity, IIA and WNI, Wilson's result only asserts that such a function is a weak dictator in the following sense. Whenever the dictator states a strict preference, this preference is followed by society. However if the dictator preference ties two alternatives, the social outcome may be a non-trivial functions of the other voters preferences. Wilson does not give a complete characterization of the functions satisfying Transitivity, IIA and WNI in this setup.

\item
Thirdly, the two proofs are quite different. In particular, the proof given here is based on a "local" argument involving 2 voters and 3 alternatives. The local nature of the proof allowed to use the proof technique developed here to derive a quantitative Arrow theorem in follow up work by the author (for more details see the conclusion section).

\item
We finally like to note that while Wilson's result does not imply the results proven here, it seems possible to adapt his proof strategy in order to obtain the same results. However, such a proof does not seem to be as helpful in extending the results to quantitative setups.
\end{itemize}

Next we consider similar results where the WNI condition is omitted.  In some scenarios, where some opinions should be given negative considerations, the Unanimity condition is not natural. The WNI condition is much more natural. If the WNI condition does not hold then the constitution $F$ always ranks $a$ above $b$ for some alternatives $a,b$.
 %Note further that the conclusion of our theorem does hold when for some pairs $a,b$ of alternatives the constitution always ranks $a$ %above $b$. Consider for example three alternatives $a,b,c$. Then the ND condition does hold for constitutions which always ranks $a$ %above $b$ as long as there are profiles which results in the constitution ranking $b>c$, $c>b$, $a>c$ and $c>a$. The ND condition rules %out constitutions that always rank $a$ at the top. Note that for such constitution we may rank $b,c$ as an arbitrary function of the %individual $b,c$ preferences while satisfying IIA. Thus the ND condition is necessary.

Note that in this case we must allow to include any constitution on two alternatives. Similarly it should allow a constitution on $4$ alternatives $a,b,c,d$ such that
$a,b$ are always ranked above $c,d$,the $a$ vs. $b$, ranking is decided arbitrarily according to the individual $a$ vs. preferences and similarly for the $c$ vs. $d$ ranking.

For the characterization it would be useful to write $A >_F B$ for the statement that for all $\sigma$ it holds that $F(\sigma)$ ranks
all alternatives in $A$ above all alternatives in $B$. We will further write $F_{A}$ for the constitution $F$ restricted to the alternatives in $A$. The IIA condition implies that $F_A$ depends only on the individual rankings of the alternatives in the set $A$.

\begin{Theorem} \label{thm:very_general}
A constitution $F$ on $k \geq 2$ alternatives satisfies IIA and Transitivity if and only if the following hold.
There exist a partition of the set of alternatives into disjoint sets $A_1,\ldots,A_r$ such that:
\begin{itemize}
\item
\[
A_1 >_F A_2 >_F \ldots >_F A_r,
\]
\item
For all $A_s$ s.t. $|A_s| \geq 3$ there exists a voter $j$ such that $F_{A_s}$ is a dictator on voter $j$.
\item
For all $A_s$ such that $|A_s| = 2$, the constitution $F_{A_s}$ is an arbitrary non-constant function of the preferences on the alternatives in $A_3$.
\end{itemize}
\end{Theorem}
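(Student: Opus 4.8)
The plan is to prove both directions, with nearly all the work in the ``only if'' direction. For the ``if'' direction I would simply verify that a block-structured $F$ satisfies the two axioms: the between-block order is fixed and hence trivially respects IIA, while each within-block rule $F_{A_s}$ is either a (generalized) dictator or an arbitrary rule on two alternatives, so it depends only on the relevant pairwise preferences, giving IIA overall. Transitivity holds because the global output on any profile is the concatenation of the fixed order $A_1 >_F \cdots >_F A_r$ with the (transitive) internal orders produced by each $F_{A_s}$, and a concatenation of linear orders is again a linear order.

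For the ``only if'' direction, assume $F$ satisfies IIA and Transitivity. By IIA each pairwise restriction $F_{\{a,b\}}$ is a well-defined function of the voters' $a$-vs-$b$ preferences. Define a relation on alternatives by declaring $a \sim b$ when $F_{\{a,b\}}$ is non-constant (equivalently, when neither $a >_F b$ nor $b >_F a$ holds). This relation is reflexive and symmetric, and the blocks $A_1,\dots,A_r$ of the theorem will be exactly its equivalence classes, so the crux is to prove that $\sim$ is transitive.

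The hard part, and the heart of the argument, is this transitivity. Suppose $a \sim b$ and $b \sim c$ but, for contradiction, $a \not\sim c$, say $a >_F c$ (the case $c >_F a$ is symmetric). First I would record a ``local independence'' fact: for a single voter on $\{a,b,c\}$, each of the four combinations of an $a$-vs-$b$ preference with a $b$-vs-$c$ preference extends to some transitive order, so across all voters the input to $F_{\{a,b\}}$ and the input to $F_{\{b,c\}}$ may be prescribed independently. Since $F_{\{a,b\}}$ and $F_{\{b,c\}}$ are non-constant, I can then choose a single profile that simultaneously forces the output $b>a$ and the output $c>b$; by Transitivity this forces $c>a$, contradicting $a >_F c$. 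Hence $a \sim c$, and $\sim$ is an equivalence relation. I expect this step — linking the three pairwise sub-constitutions through one profile via local independence — to be the main obstacle, although the computation itself is short.

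With the partition in hand I would finish as follows. For distinct blocks, every cross pair is constant; a short three-alternative transitivity argument shows that the value of $a >_F b$ for $a$ in one block and $b$ in another does not depend on the chosen representatives (otherwise two same-block alternatives would be forced into a constant order, contradicting $\sim$), so $>_F$ descends to a well-defined relation on blocks that is total, antisymmetric and transitive; relabeling gives $A_1 >_F \cdots >_F A_r$, the first bullet. For a block $A_s$ with $|A_s|=2$, the defining property of $\sim$ gives exactly the non-constant rule of the third bullet. For a block with $|A_s| \ge 3$, the restriction $F_{A_s}$ inherits IIA and Transitivity, and since every pair inside $A_s$ is $\sim$-related, each of its pairwise restrictions is non-constant; by IIA this produces a profile realizing either order of any prescribed pair, so $F_{A_s}$ satisfies WNI. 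Theorem~\ref{thm:general} then applies and forces $F_{A_s}$ to be a dictator on some voter $j$, which is the second bullet. This completes the characterization.
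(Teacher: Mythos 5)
Your proposal is correct, but it takes a genuinely different route from the paper. The paper proves the decomposition by induction on the number of alternatives: it takes a set $B$ of $k$ alternatives, applies the induction hypothesis to $F_B$, and then uses an insertion lemma (Lemma~\ref{lem:gen_ind}) to decide whether the remaining alternative $b$ slots strictly between blocks or satisfies $b \sim A_s$ for some block, in which case absorption lemmas (Lemmas~\ref{lem:gen2} and~\ref{lem:gen3}, both resting on the three-alternative ND result, Theorem~\ref{thm:3}) show $F_{A_s \cup \{b\}}$ is a dictator. You instead work globally: you define $a \sim b$ when $F_{\{a,b\}}$ is non-constant, prove directly that $\sim$ is transitive via the ``local independence'' fact that $x^{a>b}$ and $x^{b>c}$ can be prescribed simultaneously by a single profile (this is the same combinatorial fact as the paper's observation that the excluded bit patterns are only $(1,1,1)$ and $(-1,-1,-1)$), and take the blocks to be the equivalence classes; within a block of size at least $3$ you verify WNI from pairwise non-constancy and invoke Theorem~\ref{thm:general} rather than Theorem~\ref{thm:3}. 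Both arguments reduce the crux to the same local three-alternative paradox mechanism, but your version avoids the induction and the insertion lemma entirely, making the block structure emerge in one step; the paper's induction, in exchange, produces reusable structural lemmas describing exactly how a new alternative can relate to an existing decomposition. Your well-definedness argument for the order on blocks (if $a >_F b$ and $b >_F a'$ with $a \sim a'$, transitivity forces the constant order $a >_F a'$, a contradiction) is sound, and you also supply the routine ``if'' direction, which the paper leaves implicit.
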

Note that for sets $s$ such that $|A_s| = 2$, the constitution $F_{A_s}$ is arbitrary (except for the constant functions).
Note furthermore that Theorem~\ref{thm:very_general} implies Theorem~\ref{thm:general} since the WNI condition implies that the partition into sets contains only one set - the set of all candidates. Again, we note that a similar result for preferences with ties is given by Wilson~\cite{Wilson:72}. As before we note that our result gives a full characterization while Wilson's only a partial characterization, Wilson's result do not imply ours and the proof is different.

\subsection{Acknowledgement} We thank Salvador Barbera for the reference to the work of Wilson~\cite{Wilson:72}.

\section{A Short Proof of Arrow's Theorem}
Our proof proceeds of Theorem~\ref{thm:general} proceeds in $3$ steps - the base case is $n=2$ voters and $k=3$ alternatives.
We then generalize to $k=3$ alternatives and any number of voters. Finally we prove the result for arbitrary number of voters and alternatives. We later prove Theorem~\ref{thm:very_general} which requires a more detailed understanding of which alternatives "interact".
%\item
%We then show how the previous statement easily implies the statement for any $n$ and $k=3$ alternatives but showing %how is it possible
%\item
%As in previous proofs
%%\cite{Geanakoplos:05,Barbera:80}
%we use the notion of a pivotal voter.

\subsection{Preliminaries}
Recall that we denote the profile of $n$ rankings by $\sigma = (\sigma(1),\ldots,\sigma(n))$.
For each pair of alternatives $a,b$ we write $x^{a > b} = (x^{a>b}(1),\ldots,x^{a>b}(n)) $ for the vector whose $i$'th coordinate is $1$ if voter $i$ prefers $a$ to $b$ and $-1$ if voter $i$\ prefer $b$ to $a$.

\begin{Proposition}
The IIA assumption implies that for all $a,b$ there exists a function $f^{a>b} : \{-1,1\}^n \to \{-1,1\}$ such that the constitution ranks $a$ ahead of $b$ if $f^{a>b}(x^{a>b}) = 1$ and ranks $b$ ahead of $a$ if $f^{a>b}(x^{a>b}) = -1$.
\end{Proposition}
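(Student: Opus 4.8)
The plan is to unpack the definition of IIA into functional language. First I would fix an ordered pair of alternatives $a,b$. The key initial observation is that, since only strict preferences are allowed, for every profile $\sigma$ the social ranking $F(\sigma)$ places exactly one of $a,b$ above the other; ties are excluded. Hence the social comparison of $a$ versus $b$ is genuinely a binary-valued quantity, which I can record as $+1$ when $F(\sigma)$ ranks $a$ above $b$ and $-1$ when it ranks $b$ above $a$.

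Next I would invoke the IIA hypothesis directly. By definition, IIA asserts that this social comparison of $a$ versus $b$ depends only on the relative rankings of $a$ and $b$ by the individual voters, i.e.\ only on the vector $x^{a>b}(\sigma)$. This is exactly the statement that the binary outcome is a function of $x^{a>b}(\sigma)$ rather than of the full profile $\sigma$. I would therefore attempt to define $f^{a>b}(x) = +1$ if $F(\sigma)$ ranks $a$ above $b$ for some profile $\sigma$ with $x^{a>b}(\sigma) = x$, and $f^{a>b}(x) = -1$ otherwise.

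Two points then remain to be verified. The first is that $f^{a>b}$ is well defined, i.e.\ independent of which representative $\sigma$ is chosen among those with $x^{a>b}(\sigma) = x$; this is precisely the content of the IIA assumption, so it requires nothing beyond quoting the hypothesis. The second is that the domain of $f^{a>b}$ is all of $\{-1,1\}^n$, which amounts to checking that the map $\sigma \mapsto x^{a>b}(\sigma)$ is surjective onto $\{-1,1\}^n$. To see this I would, given any $x \in \{-1,1\}^n$, construct a profile realizing it: for each voter $i$ choose any strict transitive ranking that places $a$ above $b$ when $x_i = 1$ and $b$ above $a$ when $x_i = -1$, which is always possible since each voter may rank the $k \geq 3$ alternatives freely subject only to this single constraint on the pair $a,b$.

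I do not expect any genuine obstacle here, as the proposition is essentially a reformulation of IIA in the language of Boolean functions. The only points needing care are the appeal to strictness (to guarantee that the outcome is binary rather than allowing a tie) and the surjectivity argument (to guarantee that $f^{a>b}$ is defined on the whole discrete cube). Both are routine, and the conclusion follows upon combining them with the well-definedness granted by IIA.
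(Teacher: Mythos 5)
Your proof is correct and takes exactly the route the paper intends: the paper states this proposition without any proof, treating it as an immediate reformulation of IIA, and your three checks---the outcome is binary because only strict preferences are allowed, $f^{a>b}$ is well defined on the image of $\sigma \mapsto x^{a>b}(\sigma)$ precisely by IIA, and that map is surjective onto $\{-1,1\}^n$ since each voter can independently realize either relative order of $a,b$ within some strict ranking---are exactly the routine details the paper elides. Nothing is missing.
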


As in previous proofs~\cite{Barbera:80,Geanakoplos:05}
%we use the notion of a pivotal voter.
a key notion is that of {\em pivotal} voter. Recall that voter $i$ is {\em pivotal} for $f: \{-1,1\}^n \to \{-1,1\}$
if there exists $x,y \in \{-1,1\}^n$ such that $x_j = y_j$ for $j
\neq i$ and $x_i \neq y_i$ and $f(x) \neq f(y)$. We will need the following easy facts:
\begin{Proposition}
The following hold:
\begin{itemize}
\item
If $f: \{-1,1\}^n \to \{-1,1\}$ is not constant then $f$ has at least one pivotal voter.
\item
Assume that there exists a voter $1 \leq i \leq n$ such that for all pairs of alternatives $a,b$
it holds that either $f^{a>b}$ is constant or $i$ is the only pivotal voter for $f^{a>b}$ then $F(\sigma)$ is of the form
$F(\sigma) = G(\sigma(i))$ for some function $G$.
\end{itemize}
\end{Proposition}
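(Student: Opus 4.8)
The plan is to prove the two bullets separately; each is a short combinatorial consequence of the definition of pivotal voter, so I will keep the arguments brief.

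For the first bullet I would use the standard hybrid (path) argument. Since $f$ is not constant, there exist $x, z \in \{-1,1\}^n$ with $f(x) \neq f(z)$. I would interpolate between them one coordinate at a time: set $w^0 = x$, and let $w^j$ be obtained from $w^{j-1}$ by changing its $j$-th coordinate to $z_j$, so that $w^n = z$. Because $f(w^0) \neq f(w^n)$, there must be an index $j$ with $f(w^{j-1}) \neq f(w^j)$. Since $w^{j-1}$ and $w^j$ differ only in coordinate $j$, voter $j$ witnesses the definition of pivotal, so $f$ has a pivotal voter.

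For the second bullet the key observation is the contrapositive of the definition: if voter $j$ is not pivotal for $f$, then $f(x) = f(y)$ for every pair $x, y$ differing only in coordinate $j$, i.e.\ $f$ does not depend on its $j$-th coordinate. I would apply this to each $f^{a>b}$. Under the hypothesis, for every pair $a, b$ either $f^{a>b}$ is constant, or $i$ is its only pivotal voter; in the latter case every $j \neq i$ is non-pivotal, so $f^{a>b}$ depends only on its $i$-th coordinate. In either case the value $f^{a>b}(x^{a>b})$, and hence whether the constitution ranks $a$ above $b$, is determined by the single bit $x^{a>b}(i)$ recording voter $i$'s preference between $a$ and $b$, which in turn is a function of $\sigma(i)$ alone.

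To finish, I would invoke the earlier IIA Proposition to note that the full output $F(\sigma)$ is reconstructed from the collection of pairwise decisions $\{f^{a>b}(x^{a>b})\}_{a,b}$. Since each such decision has just been shown to depend only on $\sigma(i)$, the entire ranking $F(\sigma)$ depends only on $\sigma(i)$, which is precisely the assertion $F(\sigma) = G(\sigma(i))$ for some $G$. The only step warranting any care is this last reconstruction: one must observe that knowing, for every ordered pair, whether $a$ beats $b$ recovers the whole preference order, so that per-pair dependence on $\sigma(i)$ upgrades to dependence of the full ranking on $\sigma(i)$. I do not expect a genuine obstacle here, as both claims are elementary once the non-pivotal $\Rightarrow$ independence observation is in hand.
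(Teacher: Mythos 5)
Your proof is correct and follows essentially the same route as the paper: the hybrid/path argument for the first bullet (the paper phrases it in contrapositive form --- no pivotal voter forces $f$ constant) and, for the second, the observation that non-pivotality of every $j \neq i$ makes each $f^{a>b}$, and hence the whole outcome, a function of $\sigma(i)$ alone. Your extra care in reconstructing the full ranking from the pairwise decisions is a harmless elaboration of the same argument, not a different approach.
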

\begin{proof}
The first assertion follows by noting that changing any single coordinate in $x$ will not change the value of $f$ and
therefore the same is true for any number of coordinates so $f$
has to be constant. The second assertion follows from the first by noting that fixing the ranking of voter $i$ results in a constant function of the other voters.
Therefore there exists a function $G$ of voter $i$ such that $F(\sigma) = G(\sigma(i))$.
%each of the functions $f^{a>b}$ which is not constant is of the form
%\[
%f^{a>b}(x^{a>b}) = x^{a>b}(i), \mbox{ or: } f^{a>b}(x^{a>b}) = -x^{a>b}(i).
%\]
%Therefore $F$ is a function of the ranking of voter $i$ only.
\end{proof}

\subsection{Different Pivots for Different Choices imply Non-Transitivity}
We begin by considering the case of $3$ candidates named $a,b,c$ and two voters named $1$ and $2$.
We note that
\begin{Proposition}
For all $i$:
\[
\{(x_i^{a>b}(\sigma_i),x_i^{b>c}(\sigma_i),x_i^{c>a}(\sigma_i) : \sigma_i \in S_3 \} =
\{-1,1\}^3 \setminus \{(1,1,1),(-1,-1,-1)\}.
\]
Similarly, the outcome of the constitution given by $f^{a>b},f^{b>c}$ and $f^{c>a}$ is non-transitive if
and only if
\[
(f^{a>b}(x^{a>b}),f^{b>c}(x^{b>c}),f^{c>a}(x^{c>a})) \in \{(-1,-1,-1),(1,1,1)\}.
\]
\end{Proposition}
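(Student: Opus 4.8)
The plan is to reduce both assertions to a single combinatorial fact: a complete specification of the three pairwise comparisons ($a$ vs.\ $b$, $b$ vs.\ $c$, and $c$ vs.\ $a$) on three alternatives arises from a strict transitive order if and only if the corresponding sign triple is not one of the two cyclic patterns $(1,1,1)$ and $(-1,-1,-1)$. Once this fact is established, both displayed equalities fall out immediately, the first for the inputs (individual preferences) and the second for the outputs (the social ranking).

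For the first equality, I would argue that the map $\sigma_i \mapsto (x_i^{a>b}(\sigma_i), x_i^{b>c}(\sigma_i), x_i^{c>a}(\sigma_i))$ from $S_3$ into $\{-1,1\}^3$ is an injection whose image is exactly the complement of the two cyclic triples. Injectivity is immediate, since a strict order on $\{a,b,c\}$ is determined by its pairwise comparisons, so two permutations with the same sign triple coincide. To see that the two cyclic triples are missed, note that $(1,1,1)$ would mean voter $i$ prefers $a$ to $b$, $b$ to $c$, and $c$ to $a$; transitivity of $\sigma_i$ forces $a$ above $c$, contradicting $c$ above $a$, and the symmetric argument rules out $(-1,-1,-1)$. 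A counting argument then closes the gap: since $|S_3| = 6 = |\{-1,1\}^3| - 2$, an injection into the six-element set $\{-1,1\}^3 \setminus \{(1,1,1),(-1,-1,-1)\}$ must be onto it.

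For the second assertion, I would observe that the pairwise social outcomes $f^{a>b}(x^{a>b})$, $f^{b>c}(x^{b>c})$, and $f^{c>a}(x^{c>a})$ likewise specify a complete set of pairwise comparisons among $a,b,c$, so the very same equivalence applies: these comparisons form a transitive order precisely when the sign triple avoids the two cyclic patterns. Hence the social outcome is non-transitive if and only if the triple lies in $\{(1,1,1),(-1,-1,-1)\}$, which is exactly the claim.

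The argument involves no real difficulty; the only point requiring care is bookkeeping the orientation. Because the third comparison is taken as $c$ vs.\ $a$ rather than $a$ vs.\ $c$, the all-ones triple $(1,1,1)$ encodes the directed $3$-cycle $a \to b \to c \to a$ rather than a linear order, and $(-1,-1,-1)$ encodes the reverse cycle $a \to c \to b \to a$. One must keep this sign convention fixed and consistent across both parts so that the excluded triples in the first statement match the non-transitive triples in the second.
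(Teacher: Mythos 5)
Your proof is correct; the paper states this Proposition without proof, treating it as an easy observation, and your argument (injectivity of the map $\sigma_i \mapsto (x_i^{a>b}, x_i^{b>c}, x_i^{c>a})$, exclusion of the two cyclic triples by transitivity, and the count $|S_3| = 6 = |\{-1,1\}^3| - 2$ to get surjectivity) is exactly the routine verification the paper leaves implicit. Your care with the orientation convention for $c$ vs.\ $a$ is the one point where a slip could occur, and you handle it correctly.
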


Our main technical tool is the following result which is a rediscovery of a result of Barbera~\cite{Barbera:80}. 
The proof in~\cite{Barbera:80} uses Arrow's logic relation notation. We give the proof using binary bits below. 
\begin{Theorem} \label{thm:tech}
Consider a social choice function on $3$ candidates $a,b$ and $c$ and $n$ voters denoted $1,2,\ldots,n$. Assume that the social choice function satisfies that IIA condition and that there exists voters $i \neq j$ such that voter $i$ is pivotal for $f^{a>b}$ and voter $j$ is pivotal for $f^{b>c}$.
Then there exists a profile for which
$(f^{a>b}(x^{a>b}),f^{b>c}(x^{b>c}),f^{c>a}(x^{c>a}))$ is non-transitive.
\end{Theorem}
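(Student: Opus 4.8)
The plan is to reduce the general $n$-voter statement to a two-voter core and then close a cycle by a short case analysis. First I would use pivotality to extract witnesses: let $u \in \{-1,1\}^{[n]\setminus\{i\}}$ be a background for which flipping $x_i^{a>b}$ flips $f^{a>b}$, and let $w \in \{-1,1\}^{[n]\setminus\{j\}}$ be a background for which flipping $x_j^{b>c}$ flips $f^{b>c}$. For every voter $\ell \notin \{i,j\}$ I would fix a single transitive preference realizing the pair of bits $(x_\ell^{a>b}, x_\ell^{b>c}) = (u_\ell, w_\ell)$; this is always possible because, by the first part of the preceding Proposition, any prescribed values of the $a$-vs-$b$ and $b$-vs-$c$ bits extend to at least one of the six admissible rankings. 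Freezing these voters collapses $f^{a>b}$, $f^{b>c}$ and $f^{c>a}$ to functions of the preferences of voters $i$ and $j$ alone; since $i \neq j$, the bit $u_j$ is part of $u$ and the bit $w_i$ is part of $w$, so voter $i$ remains pivotal for $f^{a>b}$ and voter $j$ remains pivotal for $f^{b>c}$ in the restricted problem.

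In this two-voter core I would use voter $i$'s $a$-vs-$b$ bit $p$ and voter $j$'s $b$-vs-$c$ bit $q$ as two independent knobs. The key point, checked from the six admissible rankings, is that voter $i$ can toggle $p$ while keeping its $b$-vs-$c$ bit pinned at $w_i$, and voter $j$ can toggle $q$ while keeping its $a$-vs-$b$ bit pinned at $u_j$; hence $p$ controls $f^{a>b}$ and $q$ controls $f^{b>c}$ independently, and by pivotality each of $f^{a>b}, f^{b>c}$ can be driven to either value. Writing $\gamma := w_i$ and $\beta := u_j$ for the pinned bits, the same enumeration shows that once $p$ and $q$ are chosen, voter $i$'s $c$-vs-$a$ bit is \emph{forced} to $-p$ when $p = \gamma$ and is otherwise \emph{free}, and symmetrically voter $j$'s $c$-vs-$a$ bit is forced to $-q$ when $q = \beta$ and otherwise free. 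By the second part of the preceding Proposition, the social outcome is non-transitive exactly when $(f^{a>b}, f^{b>c}, f^{c>a})$ equals $(1,1,1)$ or $(-1,-1,-1)$, so it suffices to realize one of these two cyclic triples.

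The hard part is that the knob flipping $f^{a>b}$ also perturbs voter $i$'s $c$-vs-$a$ bit, and likewise for $f^{b>c}$ and voter $j$, so $f^{c>a}$ cannot be set independently and an adversarial $f^{c>a}$ might seem able to dodge a cycle. I would defeat this by pursuing \emph{both} target cycles at once. Aiming at $(1,1,1)$ I would set the knobs so that $f^{a>b} = f^{b>c} = 1$, and aiming at $(-1,-1,-1)$ I would set them so that $f^{a>b} = f^{b>c} = -1$; in each attempt the forced $c$-vs-$a$ bits, together with the free ones, determine which inputs of $f^{c>a}$ are reachable. Running through the four possibilities for the pinned pair $(\beta,\gamma)$ (with analogous computations for the remaining sign patterns of the two pivotal directions), one finds that the set of $f^{c>a}$-inputs reachable while pursuing $(1,1,1)$ and the set reachable while pursuing $(-1,-1,-1)$ always share a common point; for both attempts to fail, $f^{c>a}$ would have to take conflicting values at that common point, which is impossible. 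Hence some profile yields a cyclic social outcome, proving the theorem.
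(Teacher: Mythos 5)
Your proposal is correct, and it shares the paper's foundations (pivotal witnesses plus the fact that a voter's triple of bits $(x^{a>b},x^{b>c},x^{c>a})$ is realizable by a ranking iff it avoids $(1,1,1)$ and $(-1,-1,-1)$), but the closing mechanism is organized quite differently. You freeze the non-pivotal voters, fix target values $f^{a>b}=f^{b>c}=\pm1$ in two separate attempts, and defeat an adversarial $f^{c>a}$ by showing the two attempts share a reachable $c$-vs-$a$ input, then pigeonholing on the value of $f^{c>a}$ there. The paper inverts this: it never reduces to two voters or splits into cases, but instead defines the entire $c$-vs-$a$ vector \emph{first} by negating the pivotal witness backgrounds ($z_1=-y_1$ and $z_\ell=-x_\ell$ for $\ell\ge 2$), which makes every per-voter triple automatically realizable whatever the remaining free bits $x_1,y_2$ are, and then uses the two pivotal freedoms to match $f^{a>b}(x)=f^{b>c}(y)=f^{c>a}(z)$ --- in your language, it evaluates $f^{c>a}$ at the common point and only then decides which of your two attempts to run. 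In fact, once you carry out the four-case verification you currently defer (``one finds \ldots always share a common point''), your common point is exactly the paper's $z$ in closed form: for each of voters $i,j$ exactly one of your two attempts forces the $c$-vs-$a$ bit, the forced value is $z_i=-w_i=-\gamma$ and $z_j=-u_j=-\beta$ in both cases, and since the per-voter constraints are independent the reachable sets are products, so the intersection is nonempty; to make your writeup complete you should exhibit this point explicitly rather than assert it. The trade-off: your forced/free bookkeeping makes the combinatorial reason the argument cannot fail transparent, at the cost of a case analysis and a two-voter reduction, while the paper's negation trick compresses everything into one uniform, coordinate-wise construction --- the form that lends itself to the quantitative extension discussed in the conclusion.
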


\begin{proof}
Without loss of generality assume that voter $1$ is pivotal for $f^{a>b}$ and voter $2$ is pivotal for $f^{b>c}$.
Therefore there exist $x_2,\ldots,x_n$ satisfying
\begin{equation} \label{eq:xpiv}
f^{a>b}(+1,x_2,\ldots,x_n) \neq f^{a>b}(-1,x_2,\ldots,x_n)
\end{equation}
and $y_1,y_3,\ldots,y_n$ satisfying
\begin{equation} \label{eq:ypiv}
f^{b>c}(y_1,+1,y_3,\ldots,y_n) \neq f^{b>c}(y_1,-1,y_3,\ldots,y_n).
\end{equation}
Let $z_1 = -y_1$ and $z_i = -x_i$ for $i \geq 2$.
By~(\ref{eq:xpiv}) and~(\ref{eq:ypiv}) we may choose $x_1$ and $y_2$ so that
\[
f^{a>b}(x) = f^{b>c}(y) = f(z),
\]
where $x = (x_1,\ldots,x_n),y=(y_1,\ldots,y_n)$ and $z=(z_1,\ldots,z_n)$.
Note further, that by construction for all $i$ it holds that
\[
(x_i,y_i,z_i) \notin \{(1,1,1),(-1,-1,-1)\},
\]
and therefore there exists a profile $\sigma$ such that
\[
x = x(\sigma), \quad y = y(\sigma), \quad z = z(\sigma).
\]
The proof follows.
\end{proof}

\subsection{A Proof of Arrow Theorem}
\begin{proof}
Assume the constitution satisfies the IIA property. For each pair of candidate let $f^{a>b}$ be the choice function between alternatives $a$ and $b$ and let $P^{a>b}$ be the set of pivotal voters for $f^{a>b}$.

By the Unanimity assumption if follows that $f^{a>b}$ takes both values $1$ and $-1$ and therefore
$P^{a>b}$ is not empty.

Assuming that Transitivity holds,
Theorem \ref{thm:tech} implies that for all $a,b,c$ it holds that
$P^{a>b}=P^{b>c}=\{i\}$ for some $i$. Note that this implies that for all $a,b,c,d$ it holds that
$P^{a>b}=P^{b>c}=P^{c>d}=\{i\}$. In other words, there exist a voter $i$ such that for all $a,b$, the voter $i$ is the single pivotal voter for $f^{a>b}$.

This implies that for all $a,b$ either $f^{a>b}(x) = x_i$ or $f^{a>b} = -x_i$ and by Unanimity it must be
that $f^{a>b}(x) = x_i$ for all $a,b$. This implies that the constitution is a dictator on voter $i$ as needed.
\end{proof}

\subsection{$n$ voters, 3 Candidates}

In order to prove Theorem~\ref{thm:general} we need the following proposition regarding constitutions of a single voter.
\begin{Proposition} \label{prop:1}
Consider a constitution $F$ of a single voter and three alternatives $\{a,b,c\}$ which satisfies IIA and transitivity.
Then exactly one of the following conditions hold:
\begin{itemize}
\item
$F$ is constant. In other words, $F(\sigma) = \tau$ for all $\sigma$ and some fixed $\tau \in S(3)$.
\item
There exists an alternative $c$ such that $c$ is always ranked at the top (bottom) of the ranking and
$f^{a>b}(x) = x$ or $f^{a>b}(x) = -x$.
\item
$F(\sigma) = \sigma$ for all $\sigma$
\item
$F(\sigma) = -\sigma$ for all $\sigma$.
\end{itemize}
\end{Proposition}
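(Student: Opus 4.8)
The plan is to exploit that with a single voter ($n=1$) each of the three pairwise functions $f^{a>b},f^{b>c},f^{c>a}$ is a map $\{-1,1\}\to\{-1,1\}$, and there are only four such maps: the two constants, the identity $x\mapsto x$, and the negation $x\mapsto -x$. By the Proposition preceding Theorem~\ref{thm:tech}, as $\sigma$ ranges over $S_3$ the comparison vector $(x^{a>b},x^{b>c},x^{c>a})$ ranges over exactly $S:=\{-1,1\}^3\setminus\{(1,1,1),(-1,-1,-1)\}$, and $F$ is transitive iff for every $v=(v_1,v_2,v_3)\in S$ the output triple $\Phi(v):=(f^{a>b}(v_1),f^{b>c}(v_2),f^{c>a}(v_3))$ avoids $\{(1,1,1),(-1,-1,-1)\}$. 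Thus the whole statement reduces to classifying those triples of functions $\Phi$ mapping $S$ into $S$, and I would organize this by the number $m$ of non-constant functions among $f^{a>b},f^{b>c},f^{c>a}$.

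\textbf{Case $m=0$.} Here $\Phi$ is a constant vector $(\epsilon_1,\epsilon_2,\epsilon_3)$; transitivity forces it to lie in $S$, and this vector is precisely the fixed image ranking $\tau$, giving the first bullet. \textbf{Case $m=1$.} Without loss of generality the non-constant function is the comparison within the pair $\{a,b\}$, so $f^{b>c}\equiv\epsilon_2$ and $f^{c>a}\equiv\epsilon_3$ while $f^{a>b}=\pm x$. If $\epsilon_2=\epsilon_3=:\epsilon$, then choosing $v_1$ with $f^{a>b}(v_1)=\epsilon$ and completing to a vector of $S$ (e.g.\ $v_2=-v_1$) produces the forbidden output $(\epsilon,\epsilon,\epsilon)$; hence $\epsilon_2\neq\epsilon_3$, and conversely $\epsilon_2\neq\epsilon_3$ keeps two output coordinates distinct, so $\Phi(S)\subseteq S$. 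The condition $\epsilon_2\neq\epsilon_3$ says exactly that $c$ lies above (resp.\ below) both $a$ and $b$ in every profile, i.e.\ $c$ is always extreme and $a$ vs.\ $b$ is decided by $\pm x$: the second bullet.

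\textbf{Case $m=2$.} Say $f^{c>a}\equiv\epsilon_3$ is the only constant. Choosing $v_1,v_2$ so that $f^{a>b}(v_1)=f^{b>c}(v_2)=\epsilon_3$ and then selecting $v_3$ to avoid the all-equal input (always possible by flipping $v_3$ if needed) yields a $v\in S$ with image the forbidden $(\epsilon_3,\epsilon_3,\epsilon_3)$; so $m=2$ never gives a transitive constitution. \textbf{Case $m=3$.} Now $\Phi(v)=(s_1v_1,s_2v_2,s_3v_3)$ with $s_i\in\{-1,1\}$ is a bijection of $\{-1,1\}^3$, so (the forbidden set having two elements) $\Phi(S)\subseteq S$ holds iff $\Phi$ maps $\{(1,1,1),(-1,-1,-1)\}$ onto itself, i.e.\ $(s_1,s_2,s_3)\in\{(1,1,1),(-1,-1,-1)\}$. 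The two possibilities give $f^{a>b}=f^{b>c}=f^{c>a}=\mathrm{id}$, that is $F(\sigma)=\sigma$ (third bullet), or all three equal to negation, that is $F(\sigma)=-\sigma$ (fourth bullet).

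Finally, the four outcomes are mutually exclusive, being separated by the value of $m$ and, within $m=3$, by the common sign, so exactly one holds. The delicate points are the realizability claims in the $m=1$ and $m=2$ cases, namely that a forbidden output can always be forced unless the stated sign condition holds; these rely on the freedom to complete a partial comparison vector to an element of $S$. The $m=3$ rigidity is the cleanest part, following from the bijection-and-counting observation. I expect the $m=2$ impossibility to be the conceptual crux, since it is precisely what rules out the "mixed" constitutions and leaves only the four listed families.
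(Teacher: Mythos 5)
Your proposal is correct and takes essentially the same approach as the paper: a case analysis on which of $f^{a>b},f^{b>c},f^{c>a}$ are constant, exhibiting a realizable profile whose output triple lands in the forbidden set $\{(1,1,1),(-1,-1,-1)\}$ whenever the pattern is not one of the four listed (your $m=1$ sign condition and your $m=2$ impossibility are precisely the paper's ``$c$ always at top/bottom'' dichotomy and its contradiction argument for a constant $f^{c>a}$, respectively). The only cosmetic difference is in the $m=3$ rigidity step, where you use the bijection-and-counting observation while the paper directly checks the mixed-sign cases on the profile $a>b>c$.
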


\begin{proof}
Assume $F$ is not constant, then there exist two alternatives $a,b$ such that $f^{a>b}$ is not constant and therefore $f^{a>b}(x) = x$ or $f^{a>b}(x)=-x$.
Let $c$ be the remaining alternative. If $c$ is always ranked at the bottom or the top the claim follows.
Otherwise one of the functions $f^{a>c}$ or $f^{b>c}$ is not constant. We claim that in this case all three functions are non-constant. Suppose by way of contradiction that
$f^{c>a}$ is the constant $1$. This means that $c$ is always ranked on top of $a$. However, since $f^{a>b}$ is non-constant there
exists a value $x$ such that $f^{a>b}(x) = 1$ and similarly there exist a value $y$ such that $f^{b>c}(y) = 1$.
Let $\sigma$ be a ranking whose $a>b$ preference is given by $x$ and whose $b>c$ preferences are given by $y$.
Then $G(\sigma)$ satisfies that $a$ is preferred to $b$ and $b$ is preferred to $c$. Thus by transitivity it follows that
$a$ is preferred to $c$ - a contradiction. The same argument applied if $f^{c>a}$ is the constant $-1$ or if $f^{b>c}$ is a constant function.

We have thus established that all three functions $f^{a>b},f^{b>c}$ and $f^{c>a}$ are of the form $f(x)=x$ of $f(x)=-x$.
To conclude we want to show that all three functions are identical. Suppose otherwise. Then two of the functions have the same sign while the third has a different sign. Without loss of generality assume $f^{a>b}(x) = f^{b>c}(x) = x$ and $f^{c>a}(x)=-x$. Then looking at the profile $a>b>c$ we see that $\sigma' = F(\sigma)$ must satisfy $a>b$ and $b>c$ but also $c>a$ a contradiction. A similar proof applies when $f^{a>b}(x) = f^{b>c}(x) = -x$ and $f^{c>a}(x)=x$.
\end{proof}

\begin{Theorem} \label{thm:3}
Any constitution on three alternatives which satisfies Transitivity, IIA and ND is a dictator.
\end{Theorem}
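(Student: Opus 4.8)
The plan is to follow the structure of the proof of Theorem~\ref{thm:arrow} above, replacing the role of Unanimity by Non-Degeneracy. As there, IIA supplies the three comparison functions $f^{a>b},f^{b>c},f^{c>a}$ together with their pivotal sets $P^{a>b},P^{b>c},P^{c>a}$. The one genuinely new feature compared with the Unanimity proof is that, absent Unanimity, some of these functions might a priori be constant, so the first task is to rule this out using ND.

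\textbf{Step 1 (ND forces all three functions non-constant).} I would show that if any one of the three comparison functions is constant, then $F$ is degenerate. Suppose for instance $f^{a>b}\equiv 1$. I claim Transitivity then forces $f^{b>c}\equiv -1$ or $f^{c>a}\equiv -1$. Indeed, if both $f^{b>c}$ and $f^{c>a}$ take the value $+1$, say $f^{b>c}(u)=1$ and $f^{c>a}(v)=1$, then I can build a single profile realizing $x^{b>c}=u$ and $x^{c>a}=v$: for each voter $i$ the pair $(u_i,v_i)$ constrains only the remaining coordinate $x_i^{a>b}$, and since each per-voter triple need only avoid $(1,1,1)$ and $(-1,-1,-1)$, a valid value of $x_i^{a>b}$ always exists. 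On this profile the output triple is $(1,1,1)$, which is non-transitive, a contradiction. Hence $f^{b>c}\equiv -1$ (so $b$ is always at the bottom) or $f^{c>a}\equiv -1$ (so $a$ is always at the top); either way $F$ is degenerate. The case $f^{a>b}\equiv -1$, and the analogous statements for the other two pairs, follow by relabelling the alternatives and reversing. Contrapositively, ND implies that $f^{a>b},f^{b>c},f^{c>a}$ are all non-constant, so $P^{a>b},P^{b>c},P^{c>a}$ are all non-empty.

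\textbf{Step 2 (a single common pivot).} Now I would argue exactly as in the proof of Theorem~\ref{thm:arrow}. Any two of the three comparisons share an alternative, so after relabelling Theorem~\ref{thm:tech} applies to each pair: if one has a pivotal voter $i$ and the other a pivotal voter $j\neq i$, a non-transitive profile exists. Transitivity therefore forces each pivotal set to be a single voter and all three to coincide, i.e. $P^{a>b}=P^{b>c}=P^{c>a}=\{i\}$ for one voter $i$. By the second part of the proposition on pivotal voters, fixing voter $i$ makes every comparison constant in the other voters, so $F(\sigma)=G(\sigma(i))$ for a single-voter constitution $G$.

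\textbf{Step 3 (identify $G$).} Finally I would apply Proposition~\ref{prop:1} to $G$, which inherits IIA and Transitivity. Since all three comparison functions of $G$ are non-constant (Step 1), $G$ is not constant and has no alternative pinned at the top or bottom (this would force two comparisons to be constant); the only possibilities left in Proposition~\ref{prop:1} are $G(\sigma)=\sigma$ and $G(\sigma)=-\sigma$. Hence $F(\sigma)=\sigma(i)$ or $F(\sigma)=-\sigma(i)$, a dictator on voter $i$, as required. I expect Step 1 to be the main obstacle: it is the only place where ND rather than Unanimity is used, and it rests on the small realizability argument that the two value-$(+1)$ inputs of the surviving comparisons can be merged into one profile. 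The remaining steps are the Unanimity proof with "non-empty pivotal set" supplied by Step 1 instead of by Unanimity, and with the sign of the dictator left free, which accounts for the reversed dictator $-\sigma(i)$.
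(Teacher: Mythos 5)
Your proof is correct, and your Steps 2--3 are exactly the paper's argument: Theorem~\ref{thm:tech}, applied to each of the three (cyclically adjacent) pairs of comparison functions, forces all pivotal sets to be the same singleton $\{i\}$; the pivotal-voter proposition then gives $F(\sigma)=G(\sigma(i))$; and Proposition~\ref{prop:1} together with ND leaves only $G(\sigma)=\sigma$ or $G(\sigma)=-\sigma$. Where you genuinely diverge is the treatment of constant comparison functions. The paper splits into two cases: if two of $f^{a>b},f^{b>c},f^{c>a}$ are constant, ND is contradicted almost immediately (opposite constants pin the shared alternative at the top or bottom; equal constants propagate by transitivity to make $F$ a fixed ranking); otherwise at least two functions are non-constant, which is all Theorem~\ref{thm:tech} requires, and a possibly-constant third function is tolerated and only eliminated at the end inside Proposition~\ref{prop:1}. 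You instead prove up front that ND forces \emph{all three} functions to be non-constant, by an argument not in the paper: if $f^{a>b}\equiv 1$ while $f^{b>c}$ and $f^{c>a}$ each attain $+1$ at inputs $u,v$, you merge $u$ and $v$ into a single legal profile --- each per-voter triple must only avoid $(1,1,1)$ and $(-1,-1,-1)$, and only the $a$-vs-$b$ coordinate is unconstrained, so a valid completion always exists --- producing the non-transitive output $(1,1,1)$. This is precisely the profile-merging trick from the proof of Theorem~\ref{thm:tech}, run with one function degenerate instead of two pivots in different voters. Your route buys a cleaner invariant (no case analysis downstream, and the first two branches of Proposition~\ref{prop:1} are excluded for the structural reason that all three comparison functions of $G$ are non-constant, rather than by invoking ND a second time), at the cost of one extra construction; the paper's route is shorter because its two-constants case contradicts ND in a single line. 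Both uses of ND are sound, and your realizability step is verified correctly.
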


\begin{proof}
There are two cases to consider. The first case is where two of the functions $f^{a>b},f^{b>c}$ and $f^{c>a}$ are constant.
Without loss of generality assume that $f^{a>b}$ and $f^{b>c}$ are constant. Note that if $f^{a>b}$ is the constant $1$ and
$f^{b>c}$ is the constant $-1$ then $b$ is ranked at the bottom for all social outcomes in contradiction to the ND condition.
A similar contradiction is derived if $f^{a>b}$ is the constant $-1$ and $f^{b>c}$ is the constant $1$.
We thus conclude that $f^{a>b} = f^{b>c}$. However by transitivity this implies that $f^{c>a}$ is also a constant function and
$f^{c>a} = -f^{a>b}$.

The second case to consider is where at least two of the functions $f^{a>b},f^{b>c}$ and $f^{c>a}$ are not constant.
Assume without loss of generality that $f^{a>b}, f^{b>c}$ are non-constant. Therefore, each has at least one pivotal voter.
From Theorem~\ref{thm:tech} it follows that there exists a single voter $i$ such that each of the functions is either constant, or has a single pivotal voter $i$.
We thus conclude that $F$ is of the form $F(\sigma) = G(\sigma(i))$ for some function $G$. Applying Proposition~\ref{prop:1} shows that
either $G(\sigma)=\sigma$ or $G(\sigma)=-\sigma$ and concludes the proof.
\end{proof}

\subsection{ General Proof}
We now prove Theorem~\ref{thm:general}.
\begin{proof}
Note that for any set of three alternatives $A = \{a,b,c\}$ the condition of Theorem~\ref{thm:3} hold for $F_A$. This implies in particular that for all $a,b$ either $f^{a>b}(x)=-x_i$ or $f^{a>b}(x)=x_i$ for some voter $i$.
It remains to show that for all $a,b,c,d$ it holds that $f^{a>b}(x)=f^{c>d}(x)$ which implies that there exist an $i$ such that
$F(\sigma)=\sigma_i$ or $F(\sigma)=-\sigma_i$ as needed.

We first consider the case where $\{a,b\}$ and $\{c,d\}$ intersect in one element, say $d=a$. In this case, Theorem~\ref{thm:3} applied to the rankings of $a,b$ and $c$, implies the desired result.

%In order to conclude that $F$ is a dictator we want to show that there exists a single voter $i$ such that for all pairs $a,b$ either
%$f^{a>b}$ is a constant or $f^{a>b}$ has $i$ as the only pivotal voter. Fixing three alternatives $a,b$ and $c$, the argument in the %previous proof show that there are at least two non-constant functions. Therefore it suffices to show that there are no different pairs
%$\{a,b\}$ and $\{c,d\}$ such that voter $i$ is pivotal for $f^{a>b}$ and voter $j$ for $f^{c>d}$ and $j \neq i$.

%We first consider the case where $\{a,b\}$ and $\{c,d\}$ intersect in one element, say $d=a$. In this case, Theorem~\ref{thm:3} applied %to the rankings of $a,b$ and $c$, implies that $i=j$ as needed.

We finally need to consider the case where $\{a,b\}$ and $\{c,d\}$ are disjoint. Applying Theorem~\ref{thm:3} to $\{a,b,c\}$ we conclude that one of the functions $f^{a>b}(x)=f^{b>c}(x)$. Then applying it to alternatives $\{b,c,d\}$ we conclude that $f^{b>c}(x)=f^{c>d}(x)$. The proof follows.

%or $f^{b>c}$ is a dictator on $i$ (meaning $f=-x_i$ or $f=x_i$). However, note that Theorem~\ref{thm:3} applied to $\{c,d,a\}$ implies it %cannot be that $f^{c>a}$ is a dictator on $i$ and Theorem~\ref{thm:3} applied to $\{c,d,b\}$ implies that $f^{b>c}$ cannot be a dictator %on $i$. The proof follows.
\end{proof}

\section{The Characterization Theorem}
We now prove Theorem~\ref{thm:very_general}. Given a set of alternatives $A' \subset A$ and an alternative
$b \notin A$, we write $b \sim A'$ if if there exist two alternatives $a,a'\in A_s$ and two profiles $\sigma$ and $\sigma'$ s.t. $F(\sigma)$ ranks $b$ above $a$ and $F(\sigma')$ ranks $a'$ above $b$. Note that if it does not hold that $b \sim A'$ then either $\{b\} >_F A'$ or $A' >_F \{b\}$.

We will use the following lemmas.
\begin{Lemma} \label{lem:gen_ind}
Let $F$ be a transitive constitution satisfying IIA and $A_1,\ldots,A_r,\{b\}$ disjoint sets of alternatives satisfying $A_1 >_F A_2 >_F \ldots >_F A_r$. Then either
\begin{itemize}
\item
There exists an $1 \leq s \leq r+1$ such
\begin{equation} \label{eq:simple_ind}
A_1 >_F \ldots >_F A_{s-1} > \{ b \} >_F A_s >_F \ldots >_F A_r,
\end{equation}
or
\item
There exist an $1 \leq s \leq r$ such that $b \sim A_r$ and
\begin{equation} \label{eq:complicated_ind}
A_1 >_F \ldots >_F A_{s} \cup \{ b \} >_F A_{s+1} >_F \ldots >_F A_r.
\end{equation}
\end{itemize}
\end{Lemma}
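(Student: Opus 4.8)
The plan is to analyze, for each block $A_s$, the possible relative positions of $b$, and to show that the three situations $\{b\}>_F A_s$, $A_s>_F\{b\}$, and $b\sim A_s$ fit together monotonically along the chain with at most one interaction. For bookkeeping I would introduce the index set $U=\{s:\ f^{b>a}(x)=1\text{ for some }a\in A_s\text{ and some }x\}$ of blocks containing an element that $b$ can beat in some profile, and dually $D=\{s:\ f^{b>a}(x)=-1\text{ for some }a\in A_s\}$. Since each $f^{b>a}$ is $\{-1,1\}$-valued, every $s$ lies in $U\cup D$; unwinding the definitions, $s\in U\setminus D$ means $\{b\}>_F A_s$, $s\in D\setminus U$ means $A_s>_F\{b\}$, and $s\in U\cap D$ is exactly $b\sim A_s$. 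So the trichotomy is immediate, and the real work is to constrain which patterns of $U$ and $D$ can occur.

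The main tool, and the step I expect to carry the weight, is a coupling statement: for $a\in A_s$ and $c\in A_t$ with $s<t$, if $f^{b>a}$ takes the value $1$ anywhere, then $f^{b>c}\equiv 1$. To prove it I would fix such $a,c$ and use $A_s>_F A_t$, which forces $f^{a>c}\equiv 1$ and hence $f^{c>a}\equiv -1$. Transitivity of $F$ on $\{a,b,c\}$ then forbids the outcome $(f^{a>b},f^{b>c},f^{c>a})=(-1,-1,-1)$, i.e. it forbids the simultaneous events ``$b$ above $a$'' and ``$c$ above $b$''; equivalently, no profile may have $f^{b>a}=1$ and $f^{b>c}=-1$ at once. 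The key observation is that, by the Proposition on $S_3$-profiles, each voter can independently realize any of the four combinations of $(x^{b>a}(i),x^{b>c}(i))$, so the pair $(x^{b>a},x^{b>c})$ ranges over all of $\{-1,1\}^n\times\{-1,1\}^n$. Thus if $f^{b>a}(x)=1$ for even one $x$, then $f^{b>c}(y)=-1$ is excluded for every $y$, giving $f^{b>c}\equiv 1$. The dual reading is that if $f^{b>c}$ takes the value $-1$ anywhere, then $f^{b>a}\equiv -1$.

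With the coupling statement in hand the structure is forced. It says precisely that $s\in U$ and $t>s$ imply $t\in U\setminus D$, and dually that $t\in D$ and $s<t$ imply $s\in D\setminus U$. Consequently at most one index can lie in $U\cap D$: two indices $s<s'$ in $U\cap D$ would give $s\in U$, hence $s'\in U\setminus D$, contradicting $s'\in D$. If $U\cap D=\emptyset$, then $U$ is closed upward along the chain and $D$ is its complement, so $U=\{s_0,\ldots,r\}$ and $D=\{1,\ldots,s_0-1\}$ for some $1\le s_0\le r+1$; this is exactly $A_1>_F\cdots>_F A_{s_0-1}>_F\{b\}>_F A_{s_0}>_F\cdots>_F A_r$, the first alternative of the lemma, with $s_0=1$ and $s_0=r+1$ covering $b$ at the very top and the very bottom. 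If instead there is a unique $s\in U\cap D$, then $b\sim A_s$, every block above $A_s$ lies in $D\setminus U$ and every block below lies in $U\setminus D$, which is the second alternative $A_1>_F\cdots>_F A_s\cup\{b\}>_F\cdots>_F A_r$.

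The only genuinely delicate point is the coupling statement, and inside it the free, independent realizability of $(x^{b>a},x^{b>c})$ across voters; everything afterward is a clean propagation along the chain. I would also take care with the two degenerate placements $s_0=1$ and $s_0=r+1$, and note that the two alternatives of the lemma are mutually exclusive precisely because $U\cap D$ contains at most one element.
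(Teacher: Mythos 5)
Your proof is correct, but the decisive step is carried out differently from the paper. The paper uses the same overall skeleton (the trichotomy $\{b\}>_F A_s$, $A_s>_F\{b\}$, $b\sim A_s$ for each block, then confinement of the interaction to one block), but in the interacting case it argues by contradiction through Theorem~\ref{thm:3}: if $b\sim A_s$ and yet some $a\in A_{s+1}$ is ranked above $b$ in some profile, it picks $c\in A_s$ with $b$ sometimes above $c$, observes that none of $a,b,c$ is then always at the top or bottom of $F_{\{a,b,c\}}$, concludes from Theorem~\ref{thm:3} that $F_{\{a,b,c\}}$ is a dictator, and contradicts the constancy of $f^{c>a}$ forced by $A_s>_F A_{s+1}$. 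You instead prove confinement directly with your coupling statement: since per-voter transitivity only excludes the two monochromatic triples, the pair $(x^{b>a},x^{b>c})$ ranges over all of $\{-1,1\}^n\times\{-1,1\}^n$, so with $f^{c>a}\equiv-1$ fixed, any profile realizing $f^{b>a}=1$ together with $f^{b>c}=-1$ would produce the cycle $c>b>a>c$. This is essentially the realizability device inside the proof of Theorem~\ref{thm:tech}, redone by hand; it makes your argument self-contained (no appeal to the three-alternative dictatorship theorem) at the cost of repeating that construction, whereas the paper's version is shorter because it reuses machinery already proved. Your $U/D$ bookkeeping also buys some precision the paper leaves implicit: it makes explicit that at most one block can satisfy $b\sim A_s$ and that $\{s:\{b\}>_F A_s\}$ and $\{s:A_s>_F\{b\}\}$ form an upper and a lower segment of the chain, which you extract from the coupling lemma, while the paper gets the segment structure from transitivity of the social outcome ($b>_F A_s$ and $A_s>_F A_{s+1}$ give $b>_F A_{s+1}$). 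Both routes are complete and correct; yours is the more elementary, the paper's the more economical.
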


\begin{proof}
Consider first the case where for all $s$ it does not hold that $b \sim A_s$.
In this case for all $s$ either $b >_F A_s$ or $A_s >_F b$. Since $b >_F A_s$ implies
$b >_F A_{s+1} >_F \ldots$ and $A_{s'} >_F b$ implies $\ldots >_F A_{s'-1} >_F A_{s'} >_F b$ for all
$s,s'$ by transitivity, equation~(\ref{eq:simple_ind}) follows.

Next assume $b \sim A_s$. We argue that in this case
\[
\ldots >_F A_{s-1} >_F \{b\} >_F A_{s+1} >_F \ldots,
\]
which implies~\ref{eq:complicated_ind}.

Suppose by contradiction that $b >_F A_{s+1}$ does not hold. Then there exists an element $a \in A_{s+1}$
and a profile $\sigma$ where $F(\sigma)$ ranks $a$ above $b$. From the fact that $b \sim A_s$ it follows
that there exist $c \in A_s$ and a profile $\sigma'$ where $F(\sigma')$ ranks $b$ above $c$ above $a$.
We now look at the constitution $F$ restricted to $B = \{a,b,c\}$. For each of $a,b,c$ there exist at least one profile where they are not at the top/bottom of the social outcome. It therefore follows that Theorem~\ref{thm:3} applies
to $F_B$ and that $F_B$ is a dictator. However, the assumption that $A_s >_F A_{s+1}$ implies that $c >_F a$.
A contradiction. The proof that $F_{s-1} >_F b$ is identical.
\end{proof}

\begin{Lemma} \label{lem:gen3}
Let $F$ be a constitution satisfying transitivity and IIA. Let $A$ be a set of alternatives such that
$F_A$ is a dictator and $b \sim A$. Then $F_{A \cup \{b\}}$ is a dictator.
\end{Lemma}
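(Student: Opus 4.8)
The plan is to show that the dictatorship on $A$ extends across the single new alternative $b$. After relabeling (replacing $x_j$ by $-x_j$ if necessary), I may assume that $F_A$ is a dictator on voter $j$ with positive sign, i.e.\ $f^{a>a'}(x)=x_j$ for every ordered pair $a,a'$ of distinct elements of $A$. The goal is then to establish $f^{a>b}(x)=x_j$ for all $a\in A$, since together with the relations inside $A$ this says precisely that every pairwise function on $A\cup\{b\}$ is $x_j$, so that $F_{A\cup\{b\}}$ is a dictator on voter $j$.

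First I would produce a single witness $a^*\in A$ for which $f^{a^*>b}$ is non-constant. Suppose to the contrary that $f^{a>b}$ is constant for every $a\in A$; then for each such $a$ either $b>_F a$ or $a>_F b$. The hypothesis $b\sim A$ supplies $a_0,a_1\in A$ together with profiles placing $b$ above $a_0$ and $a_1$ above $b$, which under the standing assumption forces $b>_F a_0$ and $a_1>_F b$. Choosing a profile in which voter $j$ ranks $a_0$ above $a_1$ makes $F$ rank $a_0$ above $a_1$ by the dictatorship on $A$, while the two constant relations give $a_1$ above $b$ above $a_0$; transitivity then yields $a_1$ above $a_0$, a contradiction. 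Hence such an $a^*$ exists.

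Next, fix any $a\in A$ with $a\neq a^*$ and apply Theorem~\ref{thm:3} to the triple $B=\{a,a^*,b\}$. The key point is to verify non-degeneracy for $B$: since $f^{a>a^*}=x_j$ is non-constant, neither $a$ nor $a^*$ is always at the top or bottom of $F_B$, and since $f^{a^*>b}$ is non-constant, neither $a^*$ nor $b$ is always extremal either; thus for each element of $B$ some profile places it strictly between the others, so ND holds. Theorem~\ref{thm:3} then shows $F_B$ is a dictator, and Proposition~\ref{prop:1} forces all three of its pairwise functions to equal $x_{j'}$ or $-x_{j'}$ for a common voter $j'$. Because one of these functions is $f^{a>a^*}=x_j$, the only possibility is $j'=j$ with the positive sign, and therefore $f^{a>b}(x)=x_j$ and $f^{a^*>b}(x)=x_j$.

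This settles $f^{a>b}=x_j$ for the chosen $a$, and it simultaneously records that $f^{a^*>b}=x_j$ is non-constant; so the identical triple argument applied to $\{a',a^*,b\}$ for each remaining $a'\in A$ yields $f^{a'>b}(x)=x_j$. Consequently every pairwise function on $A\cup\{b\}$ equals $x_j$, completing the proof. I expect the main obstacle to be the bookkeeping in the non-degeneracy check and the sign matching that pins the triple's dictator to voter $j$ rather than to some other voter; the hypothesis $b\sim A$ is exactly what rules out the degenerate configurations and is used both to seed the witness $a^*$ and to guarantee non-degeneracy of each triple.
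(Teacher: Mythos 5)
Your proof is correct and takes essentially the same approach as the paper: both arguments reduce to Theorem~\ref{thm:3} applied to triples containing $b$, identify the triple's dictator and sign via an already-known pairwise function equal to $x_j$, and then propagate across all of $A$ one triple at a time. The only cosmetic difference is that you first extract a single witness $a^*$ with $f^{a^*>b}$ non-constant by a short transitivity contradiction, whereas the paper builds its first non-degenerate triple directly from the two alternatives witnessing $b \sim A$; this changes nothing essential.
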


\begin{proof}
Assume without loss of generality that $F_A(\sigma) = \sigma(i)$. Let $a \in A$ be such that there exist a profile
where $F$ ranks $a$ above $b$ and $c \in A$ be such there exists a profile where $a$ is ranked below $c$. Let
$B = \{a,b,c\}$. Then $F_B$ satisfies the condition of Theorem~\ref{thm:3} and is therefore dictator.
Moreover since the $f^{a>c}(x)=x(i)$ it follows that $f^{a>b}(x)=x(i)$ and $f^{b>c}(x)=x(i)$.
Let $d$ be any other alternative in $A$. Let $B = \{a,b,d\}$. Then since $f^{a>b}(x) = f^{a>d}(x) = x(i)$, the conditions of Theorem~\ref{thm:3} hold for $F_B$ and therefore $f^{b>d}(x) = x(i)$. We have thus concluded that
$F_{A \cup\{b\}}(\sigma) = \sigma$ for all $\sigma$ as needed. The proof for the case where $F_A(\sigma) = -\sigma$
is identical.
\end{proof}

Theorem~\ref{thm:3} also immediately implies the following:
\begin{Lemma} \label{lem:gen2}
Let $F$ be a constitution satisfying transitivity and IIA. Let $A$ be a set of two alternatives such that
$F_A$ is not constant and $b \sim A$. Then $F_{A \cup \{b\}}$ is a dictator.
\end{Lemma}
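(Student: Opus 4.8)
The plan is to apply Theorem~\ref{thm:3} directly to the three-element restriction $F_{A\cup\{b\}}$. Write $A=\{a,a'\}$ and set $B=A\cup\{b\}=\{a,a',b\}$. Since $F$ satisfies Transitivity and IIA, so does $F_B$, so by Theorem~\ref{thm:3} it suffices to verify that $F_B$ is non-degenerate: that none of the three alternatives $a,a',b$ is ranked at the top in every profile, and none is ranked at the bottom in every profile. Once ND is established, Theorem~\ref{thm:3} immediately yields that $F_B=F_{A\cup\{b\}}$ is a dictator in the generalized sense, which is exactly the conclusion sought.

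First I would dispose of $a$ and $a'$ using the hypothesis that $F_A$ is not constant. Non-constancy of $F_A$ means $f^{a>a'}$ takes both values, so there is a profile with $a>_F a'$ and a profile with $a'>_F a$. If $a$ were always at the top of the ranking of $B$, it would in particular always be ranked above $a'$, forcing $f^{a>a'}\equiv 1$, a contradiction; the symmetric argument rules out $a$ always at the bottom, and the analogous reasoning rules out $a'$ at the top or at the bottom.

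Next I would handle $b$ using the hypothesis $b\sim A$. By the definition of $\sim$, there is a profile $\sigma$ for which $F(\sigma)$ ranks $b$ above some element of $A$, so $b$ is not at the bottom of $B$ under $\sigma$; and a profile $\sigma'$ for which $F(\sigma')$ ranks some element of $A$ above $b$, so $b$ is not at the top of $B$ under $\sigma'$. Hence $b$ is neither always-top nor always-bottom. Having checked all three alternatives, $F_B$ is non-degenerate, and Theorem~\ref{thm:3} completes the argument.

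I do not expect any genuine obstacle: all the content lies in reducing to the three-alternative base case, after which Theorem~\ref{thm:3} does the work, exactly as in the $F_B$-is-a-dictator step of Lemma~\ref{lem:gen_ind} and in Lemma~\ref{lem:gen3}. The only point requiring care is the faithful translation of the two hypotheses — ``$F_A$ is not constant'' and ``$b\sim A$'' — into the non-degeneracy of $F_B$, namely observing that the former precludes degeneracy at $a$ and $a'$ while the latter precludes it at $b$, so that the ND hypothesis of Theorem~\ref{thm:3} is genuinely available.
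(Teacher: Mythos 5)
Your proposal is correct and is exactly the argument the paper intends: the paper gives no separate proof, stating only that Theorem~\ref{thm:3} ``immediately implies'' the lemma, and your verification of the ND hypothesis for $F_{A\cup\{b\}}$ (non-constancy of $f^{a>a'}$ rules out degeneracy at $a$ and $a'$; the two profiles witnessing $b\sim A$ rule it out at $b$) is precisely the omitted routine step, matching the analogous use of Theorem~\ref{thm:3} in Lemma~\ref{lem:gen_ind}.
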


We can now prove Theorem~\ref{thm:very_general}.
\begin{proof}
The proof is by induction on the number of alternatives $k$. The case $k=2$ is trivial. Either
$F$ always ranks $a$ above $b$ in which case $\{a\} >_F \{b\}$ as needed or $F$ is a non-constant function
in which case the set $A =\{a,b\}$ satisfies the desired conclusion.

For the induction step assume the theorem holds for $k$ alternatives and let $F$ be a constitution on $k+1$ alternatives which satisfies IIA and Transitivity. Let $B$ be a subset of $k$ of the alternatives and $b = A \setminus B$.

By by the induction hypothesis applied to $F_B$, we may write $B$ as a disjoint union of $A_1,\ldots,A_r$
such that $A_1 >_F A_2 > \ldots >_F A_r$ and such that if $A_s$ is of size $3$ or more then $F_{A_s}$ is a dictator and if $F_{A_s}$ is of size two then $F_{A_s}$ is non constant. We now apply Lemma~\ref{lem:gen_ind}. If~(\ref{eq:simple_ind}) holds then the proof follows. If~(\ref{eq:complicated_ind}) holds then the proof would follow once we show that $F_C$ is of the desired form where $C = A_s \cup \{b\}$. If $A_s$ is of size $1$ then from the definition of $\sim$ it follows that $F_{A_s \cup \{b\}}$ is non-constant as needed. If $A_s$ is of size $2$ then Lemma~(\ref{lem:gen2}) implies that $F_{A_s \cup \{b\}}$ is a dictator as needed and for the case of $A_s$ of size $3$ or more this follows from Lemma~(\ref{lem:gen3}). The proof follows.
\end{proof}

\section{Conclusion and Followup Work}
%We proved a generalization of Arrow theorem where the Unanimity condition is replaced with the weaker condition of Non Degeneracy.
%As mentioned above, this is of interest in constitutions where certain individuals signals are interpreted negatively.
A major motivation for the work presented here comes from the desire to obtain quantitative versions of Arrow theorem, where the goal is to derive lower bounds on the probability of a paradox.
Consider $n$ voters who vote independently at random, each following the uniform distribution over the $6$ rankings of $3$ alternatives.
Arrow's theorem implies that any constitution which satisfies IIA and Unanimity and is not a dictator has a probability of at least $6^{-n}$ for a non-transitive outcome, where $n$ is the number of voters. When $n$ is large, $6^{-n}$ is a very small probability, and the question arises if for large number of voters it is possible to avoid paradoxes with probability (exponentially) close to $1$.

In a follow up paper~\cite{Mossel:09b} which uses some of the techniques developed here we derive such a quantitative theorem. In other words, we show that for all $\eps > 0$, there exists $\delta(\eps) > 0$, which does not depend on $n$ such that
if a constitution $F$ satisfies IIA and the probability that $F(\sigma)$ is transitive is at least $1-\delta$, then it is $\eps$ close to a function whose probability of paradox is $0$, i.e., one of the functions listed
in Theorem~\ref{thm:very_general}.

A key feature of this new result is that it {\em does not} use a quantitative version of unanimity.
Indeed,  while it is easy to quantify Non Degeneracy, say by looking at $\min(\P[f=1],\P[f=-1])$, there is no natural way to quantify Unanimity.
 Prior quantitative work on Arrow theorem~\cite{Kalai:02,Keller:09} typically made strong assumptions on $\min(\P[f=1],\P[f=-1])$ and does not allow to obtain quantitative estimates when the number above is smaller than some constant (say $1/100$).

 In addition to the proof ideas presented here, the quantitative proof of Arrow theorem, uses a number of sophisticated mathematical tools including the theory of influences, inverse hyper-contractive estimates and non-linear invariance principles.

%\begin{Remark}
%We note that the result can be further extended to the case where one of the functions is allowed to be constant.
%For $3$ candidates this follows from the statement of the previous propositions. For more than $3$ candidates we can
%repeat the previous proof, where it is assumed that $f^{a>b}$ and $f^{c>d}$ are not constant.
%The case where the pairs $a,b$ intersects the pair $c,d$ is the same. For the case where they do not intersect we consider both functions %^{b>c}$ and the function $f^{d>a}$. Since one of these functions has to be non-constant, the proof follows. Note further that in the case %of $3$ candidates if two of the functions are constant, say $f^{a>b}$ is the constant $1$ and $f^{b>c}$ is the constant $-1$ then the %result is rational no matter what is $f^{c>a}$.
%\end{Remark}

\bibliographystyle{abbrv}
\bibliography{all,my}
\end{document}